\documentclass[a4paper,UKenglish,runningheads,11pt]{llncs}

\usepackage{tabularx,booktabs,multirow,delarray,array}
\usepackage{graphicx,amssymb,amsmath,amsfonts}
\usepackage{enumerate}
\usepackage{latexsym}
\usepackage{enumerate}
\usepackage{complexity}
\usepackage{lineno}
\usepackage{wrapfig}
\usepackage{xcolor}
\usepackage{hyperref}
\usepackage[linesnumbered,ruled,vlined]{algorithm2e}
\usepackage{fullpage}
\usepackage{subcaption}
\usepackage{textgreek}
\usepackage{xcolor}
\usepackage[font={normal}]{caption}
\captionsetup{compatibility=false}

\newtheorem{propo}{Proposition}

\newtheorem{observation}{Observation}
\newenvironment{proof}{\par\noindent{\bf Proof:}}{\mbox{}\hfill$\qed$\\}

\newcommand{\ignore}[1]{ }

\newcounter{rem}
\setcounter{rem}{0}


\begin{document}

\title{Vertex Guarding for Dynamic Orthogonal Art Galleries}
\titlerunning{Vertex Guarding for Dynamic Orthogonal Art Galleries}

\author{
Debangshu Banerjee\inst{1}
\and
R. Inkulu\inst{1}
}

\authorrunning{D. Banerjee, R. Inkulu}

\institute{
Department of Computer Science and Engineering\\
IIT Guwahati, India\\
\email{\{d.banerjee,rinkulu\}@iitg.ac.in}
}

\maketitle

\pagenumbering{arabic}
\setcounter{page}{1}

\begin{abstract}
We devise an algorithm for surveying a dynamic orthogonal polygonal domain by placing one guard at each vertex in a subset of its vertices, i.e., whenever an orthogonal polygonal domain $\cal P'$ is modified to result in another orthogonal polygonal domain $\cal P$, our algorithm updates the set of vertex guards surveying $\cal P'$ so that the updated guard set surveys $\cal P$.
Our algorithm modifies the guard placement in $O(k \lg{(n+n')})$ amortized time, while ensuring the updated orthogonal polygonal domain with $h$ holes and $n$ vertices is guarded using at most $\lfloor (n+2h)/4 \rfloor$ vertex guards.
For the special case of the initial orthogonal polygon being hole-free and each update resulting in a hole-free orthogonal polygon, our guard update algorithm takes $O(k\lg{(n+n')})$ worst-case time.
Here, $n'$ and $n$ are the number of vertices of the orthogonal polygon before and after the update, respectively; and, $k$ is the sum of $|n - n'|$ and the number of updates to a few structures maintained by our algorithm.
Further, by giving a construction, we show it suffices for the algorithm to consider only the case in which the parity of the number of reflex vertices of both $\cal P'$ and $\cal P$ are equal.
\end{abstract}

\section{Introduction}

A {\it simple polygon} $T$ is a simply-connected closed region bounded by a finite set of line segments, called {\it edges} of $T$, such that no two edges of $T$ intersect except at their endpoints.
Any endpoint of an edge of $T$ is called a {\it vertex} of $T$.
An {\it orthogonal (simple) polygon} is a simple polygon in which each edge is either parallel to $x$-axis (horizontal) or parallel to $y$-axis (vertical). 
Thus the edges alternate between horizontal and vertical, and always meet orthogonally, with each internal angle being either ${\pi}/{2}$ or ${3\pi}/{2}$. 
The orthogonal polygons are useful as approximations to simple polygons, and they arise naturally in several domains, ex., computer graphics, VLSI design, and computer architecture.
A simple polygon (resp. orthogonal polygon) containing $h \ge 0$ number of disjoint simple polygons (resp. orthogonal polygons) within it is called a {\it polygonal domain (resp. orthogonal polygonal domain)} $\cal T$.
These disjoint simple polygons (resp. orthogonal polygons) are the {\it holes} or {\it obstacles} of $\cal T$.
The {\it free space} $\mathcal{F(T)}$ of the given polygonal domain ${\cal T}$ is the closure of the outer simple polygon of $\cal T$ excluding the union of the interior of holes contained in it. 
A simple polygon (resp. a hole-free orthogonal polygon) is a polygonal domain (resp. an orthogonal polygonal domain) with no holes. 
The {\it reflex parity} of any orthogonal polygonal domain is the parity of the number of reflex vertices defining it. 
In this paper, all the polygons are assumed to be in the plane.
For convenience, we call a hole-free orthogonal polygon as an orthogonal polygon.

Consider a polygonal domain $\cal T$ with its free space $\mathcal{F(T)}$.
Any two points $p', p''$ in $\cal{F(T)}$  are said to be {\it visible} to each other whenever the line segment joining $p'$ and $p''$ lies in $\cal{F(T)}$.
The {\it art gallery problem} seeks to locate a set $G$ of {\it guards} in $\cal{F(T)}$ such that (i) every point $p$ in $\cal{F(T)}$ is visible to at least one guard in $G$, and (ii) the cardinality of $G$ is the minimum possible.
A guard $g$ in $G$ is a {\it vertex guard} if $g$ is located at a vertex of the polygon; otherwise, $g$ is a {\it point guard}.
If all the guards in $G$ are vertex guards, then that polygonal domain is said to be guarded with vertex guards.

In the description below, the number of vertices of a polygon is denoted by $n$.
In \cite{journals/combtheory/Chvat75}, Chvatal had shown that $\lfloor n/3 \rfloor$ vertex guards are both necessary and sufficient to guard a simple polygon.
Fisk gave a different and simpler sufficiency proof of the same result in \cite{journals/combtheory/Fisk78}.
Based on the proof given in \cite{journals/combtheory/Fisk78}, Avis and Toussaint~\cite{journals/pattrecog/AvisTouss81} devised an algorithm to position vertex guards in any simple polygon in $O(n \lg{n})$ time.
Using $O(n)$ time triangulation algorithm by Chazelle \cite{journals/dcg/Chazelle91}, this algorithm essentially takes $O(n)$ time.

In \cite{journals/siamjadm/KahnKK83}, Kahn et~al. gave the art gallery theorem for hole-free orthogonal polygons: they showed $\lfloor n/4 \rfloor$ vertex guards are occasionally necessary and always sufficient to guard any hole-free orthogonal polygon with $n$ vertices.
Their algorithm computes a convex quadrilateralization of the given hole-free orthogonal polygon and places one guard at a vertex of each of these quadrilaterals. 
A convex quadrilateralization of a hole-free orthogonal polygon can be computed using algorithms of Edelsbrunner et~al.~\cite{EdelsRW84}, Lubiw~\cite{conf/scg/Lubiw85}, or Sack and Toussaint~\cite{journals/compmorph/SackTouss88}.
Using $O(n)$ time triangulation algorithm by Chazelle~\cite{journals/dcg/Chazelle91}, the algorithm in \cite{journals/siamjadm/KahnKK83} takes $O(n)$ time.
In \cite{journals/jourgeom/ORourke83}, O'Rourke gave a different approach to show $\lfloor n/4 \rfloor$ vertex guards are sufficient to guard any hole-free orthogonal polygon with $n$ vertices.
In $O(n)$ time, the algorithm in \cite{journals/jourgeom/ORourke83} partitions the given hole-free orthogonal polygon into $L$-shaped orthogonal polygons, and then places one vertex guard at the reflex vertex of every $L$-shaped orthogonal polygon.
(The main ideas of this algorithm are presented in Section~\ref{sect:holefree}.)
The problem of finding a minimum number of vertex guards to guard the given hole-free orthogonal polygon is known to be NP-hard (refer \cite{books/autorobotveh/LeeLin90,jorunals/mlq/SchuHeck95}), and there are a number of approximation algorithms devised for this problem (refer \cite{journals/dam/BhattGhoshRoy17,journals/dam/Ghosh10,journals/cgta/KatzRoi08,journals/cgta/King13}).

In the description below, the number of holes in a polygon with holes is denoted by $h$, and the number of vertices defining the polygon with holes (i.e., the number of vertices of the outer polygon added with the number of vertices of all the holes) is denoted by $n$. 
By applying the convex quadrilaterlization algorithm in \cite{conf/scg/Lubiw85} to an orthogonal polygon with holes, O'Rourke~\cite{books/agtalgo/ORourke87} had shown that $\lfloor (n + 2h)/4 \rfloor$ vertex guards are sufficient to guard the free space of any orthogonal polygon with holes.
Nevertheless, no examples of orthogonal polygons with holes are known to require more than $\lfloor n/4 \rfloor$ vertex guards.
Aggarwal~\cite{books/jhu/Aggar84} established that $\lfloor n/4 \rfloor$ vertex guards suffice when $h = 1$ or $h = 2$. 
Aggarwal~\cite{books/jhu/Aggar84} and Shermer~\cite{tr/nyit/Sher85} respectively conjectured that $\lfloor 3n/11 \rfloor$ and $\lfloor (n+h)/4 \rfloor$ vertex guards are sufficient to guard any orthogonal polygon with $h$ number of holes and $n$ number of vertices defining it.
Hoffmann~\cite{conf/icalp/Hoff90} shown that $\lfloor n/4 \rfloor$ point guards are always sufficient and sometimes necessary to survey any orthogonal polygonal domain.
In addition, in \cite{conf/focs/HoffKK91}, Hoffmann et~al. proved that $\lfloor (n+h)/3 \rfloor$ point guards are sufficient to guard the free space of any given polygonal domain.
To guard the free space of any given polygonal domain with $\lfloor (n+h)/3 \rfloor$ point guards, Bjorling-Sachs and Souvaine~\cite{journals/dcg/BjorsachsSouv95} devised an $O(n^2)$ time algorithm.

Several algorithms for visibility computations are detailed in the text by Ghosh in \cite{books/visalgo/skghosh2007}.
The text by O'~Rourke \cite{books/agtalgo/ORourke87} gives a detailed presentation of various well-known algorithms for guarding art galleries.
Algorithms for a number of guarding and illumination problems are presented in \cite{incoll/cghb/Urru00}.

When a polygonal domain is modified, instead of applying any of the algorithms mentioned above, it is desirable to have a {\it local update algorithm}, which essentially changes the set of guards and their positions locally with respect to the recent modification of the polygonal domain.
A polygonal domain (resp. orthogonal polygonal domain) is called a {\it dynamic polygonal domain} (resp. {\it dynamic orthogonal polygonal domain}) whenever a specific set of discrete events modify it while each such modification yields a polygonal domain (resp. an orthogonal polygonal domain).

For any point $p$ located in the free space $\cal{F(T)}$ of a polygonal domain $\cal T$, the {\it visibility polygon} $VP_{\cal T}(p)$ of $p$ is the maximal set $S$ of points in $\cal{F(T)}$ such that every point in $S$ is visible to $p$.
The {\it visibility polygon query} problem seeks to preprocess the given polygonal domain $\cal T$ so that to efficiently compute the visibility polygon of any point $p \in \cal{F(T)}$ given in the query phase of the algorithm. 
Inkulu and Nitish \cite{conf/caldam/Inkulu17} and Inkulu et~al. \cite{journals/ijcga/InkuluST20} devised algorithms for maintaining the visibility polygon of any query point in dynamic simple polygons.
For the case of dynamic polygonal domains, recently Agrawal and Inkulu~\cite{conf/cocoon/AgrawInkulu20} presented algorithms for maintaining the visibility polygon of any query point.
The {\it (vertex-vertex) visibility graph} of a polygonal domain $\cal T$ is the undirected graph with its vertex set comprising all the vertices of $\cal T$ and the edge set comprising every line segment with its endpoints $v', v''$ being the vertices of $\cal T$ such that $v'$ and $v''$ are visible.
For dynamically maintaining the visibility graph, Choudhury and Inkulu~\cite{conf/caldam/Inkulu19} and Agrawal and Inkulu~\cite{conf/cocoon/AgrawInkulu20} presented algorithms.
However, to our knowledge, there are no algorithms known for updating the guard set in dynamic domains.

\subsection*{Our results}

In this paper, for dynamic orthogonal polygonal domains, an algorithm is proposed to update the set of vertex guards and to reposition a subset of the vertex guards currently surveying, as the polygonal domain is dynamically modified. 
Our algorithm uses at most $\lfloor (n+2h)/4 \rfloor$ vertex guards when the updated polygonal domain has $n$ vertices and $h$ orthogonal holes.
Our algorithm modifies the guard placement local to where the orthogonal polygonal domain is modified.
As part of this, by using thin horizontal rectangles, we convert any given orthogonal polygon with holes into a hole-free orthogonal polygon; these rectangles are called {\it channels}.
The resultant hole-free orthogonal polygon $P$ is partitioned into L-shaped orthogonal polygons by horizontally/vertically projecting a subset of reflex vertices of $P$; the line segments resultant from these projections are called {\it cuts}.
(The precise definitions of cuts and channels are given in later sections.)
Our algorithm takes $O(k \lg{(n +  n')})$ amortized time to update the set of vertex guards whenever the current orthogonal polygonal domain $\cal P'$ is updated.
Here, $n'$ and $n$ are the number of vertices of the orthogonal polygonal domain before and after the update, respectively; $k$ is the sum of the number of vertices added to or deleted from $\cal P'$, the number of cuts in the L-shaped partitioning of $\cal{F(P')}$ that got affected due to the modification, and the number of channels in $\cal{F(P')}$ that got affected due to the modification.
When there are no holes in the initial input orthogonal polygon as well as in subsequent orthogonal polygons resultant of any update, our guard update algorithm takes $O(k \lg{(n +  n')})$ worst-case time.
The initial orthogonal polygonal domain $\cal Q$ with $q$ vertices (the polygon before any dynamic updates) is preprocessed in $O(q \lg{q})$ time to construct a few data structures of size $O(q\frac{\lg{q}}{\lg\lg{q}})$ and to survey $\cal Q$ with $\lfloor q/4 \rfloor$ vertex guards.
Further, we give a construction to show that any vertex guarding algorithm to handle dynamic updates requires to consider only the case in which the parity of the number of reflex vertices of both $\cal P'$ and $\cal P$ are equal; otherwise, we show all the vertex guards may need to be re-positioned.

To our knowledge, this is the first algorithm for guarding a dynamic art gallery.
This algorithm obviates to re-compute the entire set of guards to survey the modified polygonal domain whenever a small section of the polygonal domain is modified.
Though the $k$ in the time complexities of algorithms for both the orthogonal polygon with holes as well as without holes is $O(n+n')$, the value of $k$ is in general much smaller to $n$, since the dynamic updates are typically local. 
Hence, our algorithm is in general efficient in handling dynamic updates as compared to applying directly the traditional vertex guarding algorithms to the updated orthogonal polygonal domain.

\subsubsection*{Our approach}

Here we give an outline of our approach.
We reduce the problem of vertex guarding any orthogonal polygonal domain with $h$ holes and $n$ vertices to the problem of vertex guarding a hole-free orthogonal polygon with $(n + 2h)$ vertices.
This is accomplished by constructing thin horizontal rectangles (channels) in the free space of the orthogonal polygon with holes.
We specialize the channel notion in Bjorling-Sachs and Souvaine~\cite{journals/dcg/BjorsachsSouv95} to orthogonal polygons with orthogonal holes.
Mainly, after every update to the orthogonal polygon with holes, we update the relevant channels to transform the updated orthogonal polygonal domain $\cal P$ into a hole-free orthogonal polygon $P$. 
Then we update the set of vertex guards to guard the hole-free orthogonal polygon using the algorithm for dynamic hole-free orthogonal polygons.
We identify a minimal sized orthogonal polygon $R \subset P$ for which vertex guards need to be determined afresh. 
We independently vertex guard $R$ using the algorithm in \cite{journals/jourgeom/ORourke83}. 
The vertex guards computed for $R$ and the vertex guards located in $P-R$ together are shown to guard $P$, and in turn $\cal P$.
We maintain the hole-free orthogonal polygon with $n''$ vertices corresponding to any orthogonal polygonal domain with $h$ holes and $n$ vertices, such that $n'' = (n+2h)$.
Notably, if $n''$ is the number of vertices of the hole-free orthogonal polygon $P$, our algorithm places at most $\lfloor n''/4 \rfloor$ vertex guards to guard $P$.

Before any modification to the initial input orthogonal polygonal domain $\cal Q$ (the one before any dynamic updates) defined with $h$ holes and $q$ vertices, by removing channels (a set of thin horizontal rectangles) from $\cal Q$, we compute a hole-free orthogonal polygon $Q$.
Then using the algorithm in \cite{journals/jourgeom/ORourke83}, we partition $Q$ into $\lfloor (q+2h)/4 \rfloor$ L-shaped orthogonal polygons, and guard each such L-shaped orthogonal polygon with one vertex guard.
In addition, for efficiently updating channels and cuts, we construct a few data structures in the preprocessing phase.

\subsubsection*{Terminology}

We assume the initial orthogonal polygonal domain and every orthogonal polygonal domain that is resulted due to updates are in general position, i.e., there are no two reflex vertices visible to each other either along a vertical line segment or along a horizontal line segment.
A hole-free orthogonal polygon is also called a {\it piece}.
The initial input hole-free orthogonal polygon (resp. orthogonal polygonal domain), the one before any of updates, is denoted by $Q$ (resp. $\cal Q$).
The hole-free orthogonal polygon (resp. orthogonal polygonal domain) just before any update is denoted by $P'$ (resp. $\cal P'$).
And, the hole-free orthogonal polygon (resp. orthogonal polygonal domain) just after any update is denoted by $P$ (resp. $\cal P$).
For any hole-free orthogonal polygon $P$, the boundary of $P$ is denoted by $bd(P)$.

\medskip

Section~\ref{sect:holefree} details an algorithm for vertex guarding dynamic hole-free orthogonal polygons.
The algorithm to update the set of vertex guards of a dynamic orthogonal polygonal domain is presented then in Section~\ref{sect:holes}.
The conclusions are in Section~\ref{sect:conclu}.

\section{Handling updates in a dynamic hole-free orthogonal polygon}
\label{sect:holefree}

In this section, we devise an algorithm to update the set of vertex guards when the initial input is a hole-free orthogonal polygon, and each update also leads to a hole-free orthogonal polygon.
The initial input orthogonal polygon $Q$ with $q$ vertices is preprocessed to compute a few data structures, to partition $Q$ into L-shaped pieces, and in turn, for guarding $Q$ using at most $\lfloor q/4 \rfloor$ vertex guards. 
In the update algorithm, we separate an orthogonal polygon $R$ from the updated orthogonal polygon $P$ wherein the vertex guarding of $R$ may require modifying, and hence we independently determine vertex guards to survey $R$.
In Subsection~\ref{subsect:charaffreg}, we define a few properties of $R$, and the algorithm to separate $R$ from $P$ is given in Subsection~\ref{subsect:sepaffreg}.

\subsection{Characterizing the affected region of $P$}
\label{subsect:charaffreg}

We first briefly present a few observations from \cite{journals/jourgeom/ORourke83}.
It is known that $\lfloor n/4 \rfloor$ vertex guards are occasionally necessary and always sufficient to guard any orthogonal polygon with $n$ vertices.
Since a single vertex guard can guard any L-shaped orthogonal polygon, to guard any hole-free orthogonal polygon $T$ with $n$ vertices, it suffices to partition $T$ into at most $\lfloor n/4 \rfloor$ L-shaped orthogonal polygons. 
Let $r$ be the number of reflex vertices of $T$.
Then, it is immediate to note that $(n - 2)\pi = (r\frac{3\pi}{2}) + ((n-r)\frac{\pi}{2})$.
Hence, $n = 2r + 4$, and so $\lfloor n/4 \rfloor = \lfloor r/2 \rfloor + 1$. 
Therefore, to survey a hole-free orthogonal polygon with $n$ vertices using at most $\lfloor n/4 \rfloor$ vertex guards, it is sufficient to partition $T$ into $\lfloor r/2 \rfloor + 1$ L-shaped pieces. 
In any orthogonal polygon $T$, for any reflex vertex $v$, the {\it horizontal cut} (resp. {\it vertical cut}) incident to $v$ is the horizontal (resp. vertical) line segment joining $v$ with a point $p$ on $bd(T)$ such that the open line segment $pv$ is located interiorly to $T$.
A horizontal or vertical cut $C$ that is incident to a reflex vertex $v$ resolves $v$, i.e., vertex $v$ is no longer reflex in either of the two pieces of the partition determined by $C$.
A cut $C$ is said to be an {\it odd cut} if at least one of the two pieces determined by $C$ has an odd number of reflex vertices. 
The main observation in \cite{journals/jourgeom/ORourke83} is that odd cuts help in devising a natural divide-and-conquer algorithm to partition the orthogonal polygon into L-shaped pieces: If we partition an orthogonal polygon by cutting along any odd cut $C$, the two pieces that are determined by $C$ can be partitioned into L-shaped pieces independently.
Let $C$ be an odd cut of a hole-free orthogonal polygon $T$ (such that $C$ partitions $T$ into two pieces).
Also, let $r_1$ and $r_2$ be the number of reflex vertices in each of the pieces determined by $C$, without loss of generality, say, $r_1$ is odd.
Then, as $r = r_1 + r_2 + 1$, and since $r_1$ is odd, it is immediate to note that $\lfloor r/2 \rfloor + 1 = \lfloor (r_1 - 1)/2 \rfloor + 1 +  \lfloor r_2/2 \rfloor + 1 = \lfloor r_1/2 \rfloor + 1 +  \lfloor r_2/2 \rfloor + 1$.
This says that if the piece with $r_1$ reflex vertices can be partitioned into at most $\lfloor r_1/2 \rfloor + 1$ L-shaped pieces, and the piece with $r_2$ reflex vertices can be partitioned into at most $\lfloor r_2/2 \rfloor + 1$ L-shaped pieces, then $T$ can be partitioned into at most $\lfloor r/2 \rfloor + 1$ L-shaped pieces, provided $T$ has an odd cut.
However, as shown in \cite{journals/jourgeom/ORourke83}, there exists an odd cut in any orthogonal polygon that is in general position.
Therefore, it is evident that to partition any orthogonal polygon into L-shaped pieces, one needs to find odd cuts efficiently.

\begin{figure}[h]
    \centering
    \begin{subfigure}{.3\textwidth}
    \centering
    \includegraphics[height=2cm]{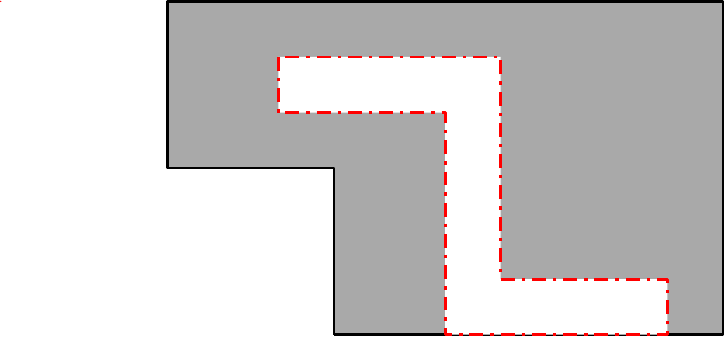}
    \end{subfigure}
    \qquad
    \begin{subfigure}{.3\textwidth}
    \centering
    \includegraphics[height=2cm]{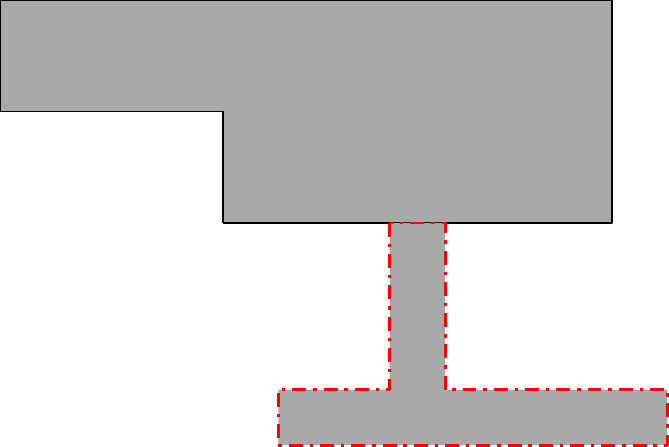}
    \end{subfigure}
    \centering
    \begin{subfigure}{.3\textwidth}
    \centering
    \includegraphics[height=2cm]{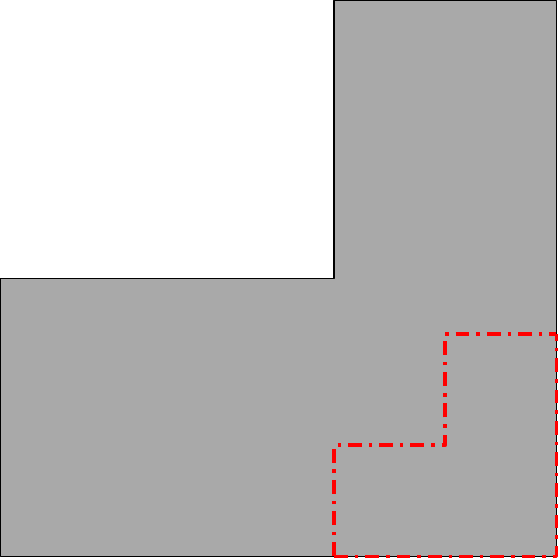}
    \qquad
    \end{subfigure}
    \caption{
    The left figure illustrates a type-I update in which $R'$ is subtracted from $P'$. 
    The middle and right figures show a type-II update; in these, $R'$ is unioned with $P'$, while $bd(R') \cap bd(P')$ is a staircase.
    The boundary of $R'$ is shown in red color.
    The updated hole-free orthogonal polygon $P$ is shaded.
    }
    \label{fig:typeIupdates}
\end{figure}
We preprocess $Q$ using the algorithm in \cite{journals/jourgeom/ORourke83}.
(To remind, $Q$ is the orthogonal polygon before any of updates.) 
The algorithm in \cite{journals/jourgeom/ORourke83} partitions $Q$ into L-shaped pieces, and guards each such L-shaped piece with one vertex guard.
Let $P'$ be an orthogonal polygon just before any update.
We support two types of updates to $P'$ and both of these updates ensure the updated polygon $P$ is a hole-free orthogonal polygon and the reflex parity of $P$ is same as the reflex parity of $P'$.
In a {\it type-I update}, an orthogonal polygon $R'$ is subtracted from $P'$ while $R'$ is positioned such that (a) $R'$ is interior to $P'$, and (b) an edge of $R'$ is abutting a section of an edge of $P'$.
In a {\it type-II update}, an orthogonal polygon $R'$ is unioned with $P'$ while $R'$ is positioned such that (a) $R'$ is exterior to $P'$, and (b) $R'$ abuts $P'$ along a rectilinear staircase on $bd(R')$.
Since a line segment is a staircase, uniting $R'$ with $P'$ along an edge of $P'$ is a type-II update.
(Refer to Fig.~\ref{fig:typeIupdates}.)
It is immediate to note that a type-I update suffice to delete any hole-free orthogonal polygon from $P'$, and a type-II update suffice to unite any hole-free orthogonal polygon with $P'$ along a staircase on $bd(P')$, except for the reflex parity restriction imposed on $P$ in both of these updates.

In the following, by giving an example construction, we illustrate that without the reflex parity restriction on $P$ in these updates, every guard in $P'$ may need to be re-positioned so that to guard $P$ with $n$ vertices using at most $\lfloor n/4 \rfloor$ vertex guards.
The left orthogonal polygon in Fig.~\ref{fig:twoupdates} has $10$ vertices, and it can be guarded using two vertex guards.
By attaching two rectangles in two successive type-II updates, one touching edge $ab$, and the other touching edge $hg$ of $P'$, the resultant polygon after these two updates is the orthogonal polygon on the right in Fig.~\ref{fig:twoupdates}. 
\begin{wrapfigure}{r}{0.5\textwidth}
    \centering
    \begin{minipage}[t]{\linewidth}
    \centering
    \includegraphics[height=3cm]{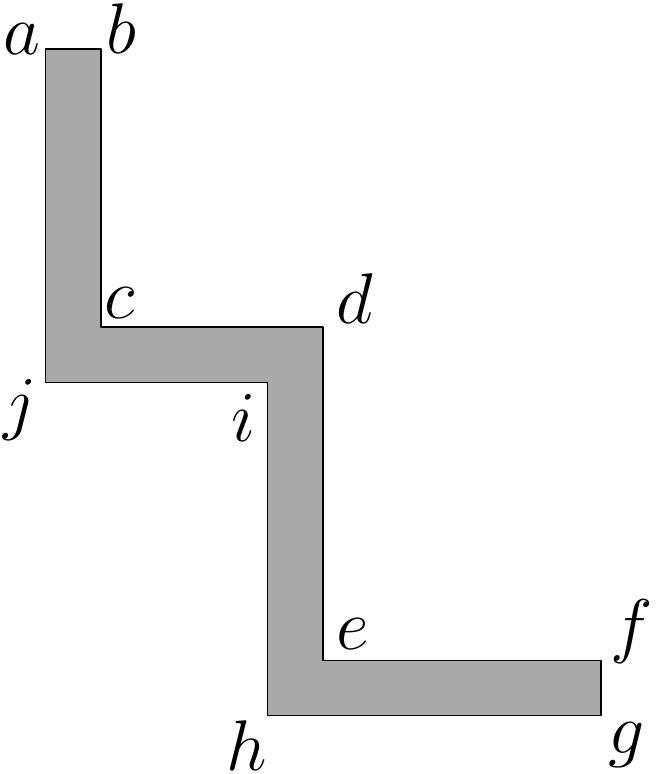}
    \includegraphics[height=3cm]{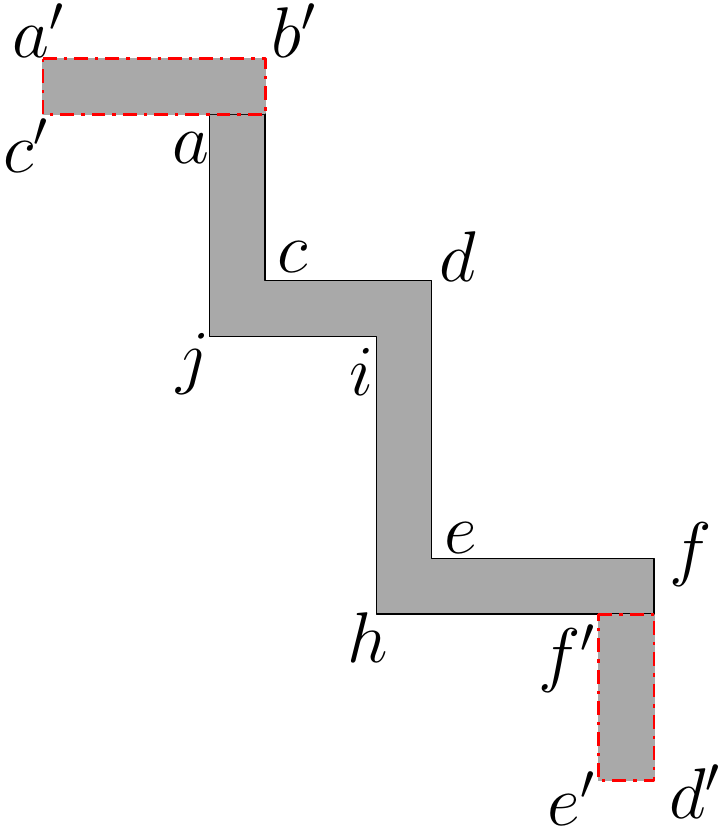}
    \end{minipage}
    \caption{Illustrating two rectangles attached to the polygon on the left resulting in the updated polygon shown on the right.  Specifically, guarding the polygon on the right requires relocating all the vertex guards for the polygon on the left.}
    \label{fig:twoupdates}
\end{wrapfigure}
And, this updated polygon can be guarded using three vertex guards; however, positions of all the vertex guards for surveying the left polygon in Fig.~\ref{fig:twoupdates} need to be re-positioned for guarding the polygon on the right in Fig.~\ref{fig:twoupdates}.
To extend this argument, consider a polygon $P''$ of analogous shape (a contiguous sequence of L-shapes) with an asymptotically large number of reflex vertices, say $\ell$.
When $P''$ is updated as mentioned above, to obey the upper bound on the number of vertex guards, it necessitates relocating \textOmega($\ell$) guards positioned at vertices of $P''$.

When an orthogonal polygon $R'$ is removed from an orthogonal polygon $P'$, resulting in an orthogonal polygon $P$, a cut $C$ in $P'$ is said to be an {\it affected cut} whenever (i) $C$ intersects $R'$ or $C$ abuts an edge of $R'$, or (ii) $C$ is incident to an affected cut.
(Refer to Fig.~\ref{fig:affected_cut}.) 
A cut in $P$ that is not an affected cut is an {\it unaffected cut}.

\begin{wrapfigure}{r}{0.4\textwidth}
    \vspace{-0.4in}
    \centering
    \begin{minipage}[t]{1.2\linewidth}
    \centering
    \includegraphics[width=3cm]{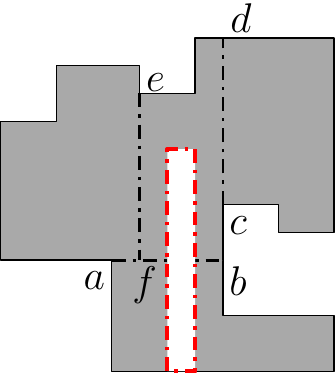}
    \end{minipage}
    \caption{
    Illustrating affected cuts $ab$ and $ef$.
    The cut $ab$ is an affected cut since the line segment $ab$ intersects $R'$.
    The cut $ef$ is an affected cut since it is incident on the affected cut $ab$.
    }       
    \label{fig:affected_cut}
\end{wrapfigure}
Without loss of generality, we assume the polygon $P$ has an odd number of reflex vertices.
Otherwise, as in \cite{journals/jourgeom/ORourke83}, we can introduce an additional reflex vertex by removing a rectangular chip around a convex vertex, and adding that chip back after guarding the rest of the orthogonal polygon.
As detailed in \cite{journals/jourgeom/ORourke83}, the advantage of having an odd number of reflex vertices is that if we split the polygon into two by cutting along an odd cut, then the parity of all the cuts in two smaller polygons remain unchanged.
This fact does not hold for polygons with an even number of reflex vertices. 

We note that for any unaffected horizontal odd cut $C$ in $P$, the newly introduced reflex vertices in $P$ (that are not present in $P'$) are to one of the sides of $C$.
The same holds for vertical odd cuts as well.
Since the reflex parity of both $P'$ and $P$ is odd, all unaffected horizontal odd cuts (resp. vertical odd cuts) remain horizontal odd cuts (resp. vertical odd cuts) after modifying $P'$ with a type-I or a type-II update.

\begin{observation}
If $C$ is an unaffected horizontal odd cut (resp. vertical odd cut) in a hole-free orthogonal polygon $P'$, then $C$ is a horizontal odd cut (resp. vertical odd cut) in $P$.
\end{observation}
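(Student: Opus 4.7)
\medskip

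\noindent\textbf{Proof plan.} The plan is to combine the parity constraint $r(P')\equiv r(P)\equiv 1\pmod 2$ with the geometric fact that an unaffected cut $C$ confines the modification $R'$ to a single side of $C$.

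First I would verify that the same line segment $C$ is still a cut of $P$ with the same orientation. Because $C$ is unaffected, its interior and its two endpoints---the incident reflex vertex $v$ and the opposite endpoint $p\in bd(P')$---are disjoint from $R'$. Hence $v$ survives as a reflex vertex in $P$, $p$ stays on $bd(P)$, the open segment $vp$ remains interior to $P$, and the horizontal (resp. vertical) orientation of $C$ is trivially preserved.

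Next, since $R'$ is connected and disjoint from $C$, the modification---whether the removal of $R'$ in a type-I update or the attachment of $R'$ along a staircase in a type-II update---lies wholly inside one of the two pieces of $P'$ that $C$ determines. Let $r_1$ and $r_2$ denote the reflex counts of these two pieces in $P'$, with $r_2$ labelling the unaffected side. Cutting along $C$ resolves the reflex vertex $v$ and only introduces convex corners at $p$, giving the identity $r(P')=r_1+r_2+1$. Oddness of $r(P')$ forces $r_1+r_2$ to be even, so $r_1$ and $r_2$ share the same parity; since $C$ is an odd cut of $P'$, at least one of them is odd, and therefore both are odd. In $P$, the piece on the unaffected side still contains exactly $r_2$ reflex vertices, which is odd. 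Hence at least one of the two pieces of $P$ determined by $C$ has an odd reflex count, and $C$ is an odd cut of $P$ of the same orientation.

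The one delicate point is the local bookkeeping at the endpoints of $C$ that underwrites the identity $r(P')=r_1+r_2+1$: cutting along $C$ splits the $3\pi/2$ interior angle at $v$ into a $\pi$ angle on one side and a $\pi/2$ angle on the other, and splits $p$ into two convex $\pi/2$ corners, so no new reflex vertices are created while $v$ is removed from the reflex set. Once this is confirmed, the rest is a one-line parity computation, and the argument for vertical unaffected odd cuts is identical after swapping the two orientations.
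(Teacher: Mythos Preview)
Your proof is correct and follows essentially the same approach as the paper's brief justification (the paragraph immediately preceding the observation): the modification $R'$ is confined to one side of the unaffected cut $C$, and the odd reflex parity of $P'$ forces both pieces of $P'$ determined by $C$ to have odd reflex count, so the unchanged side already certifies $C$ as an odd cut in $P$. The paper does not spell out the verification that $C$ survives as a cut or the endpoint bookkeeping for the identity $r(P')=r_1+r_2+1$, so your write-up is simply a more careful elaboration of the same idea.
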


In particular, cutting along all unaffected horizontal cuts lead to a set $S$ of polygons, with each polygon in $S$ having only vertical cuts.
Indeed, as noted in \cite{journals/jourgeom/ORourke83}, each polygon in $S$ is a union of two histograms, both having the same horizontal line segment as their base.
Essentially, as all unaffected vertical and horizontal cuts are odd cuts, we separate the {\it affected region} $R$ from $P$ by cutting along a subset of these cuts. 
Here, $R$ is a minimal sized orthogonal polygon that intersects all the affected cuts in $P$.
(Refer to Fig.~\ref{fig:unaffvertcutsvalid}.)
The following observation is helpful for our algorithm.

\begin{wrapfigure}{r}{0.4\textwidth}
    \centering
    \begin{minipage}[t]{1.2\linewidth}
    \centering
    \includegraphics[height=3.8cm]{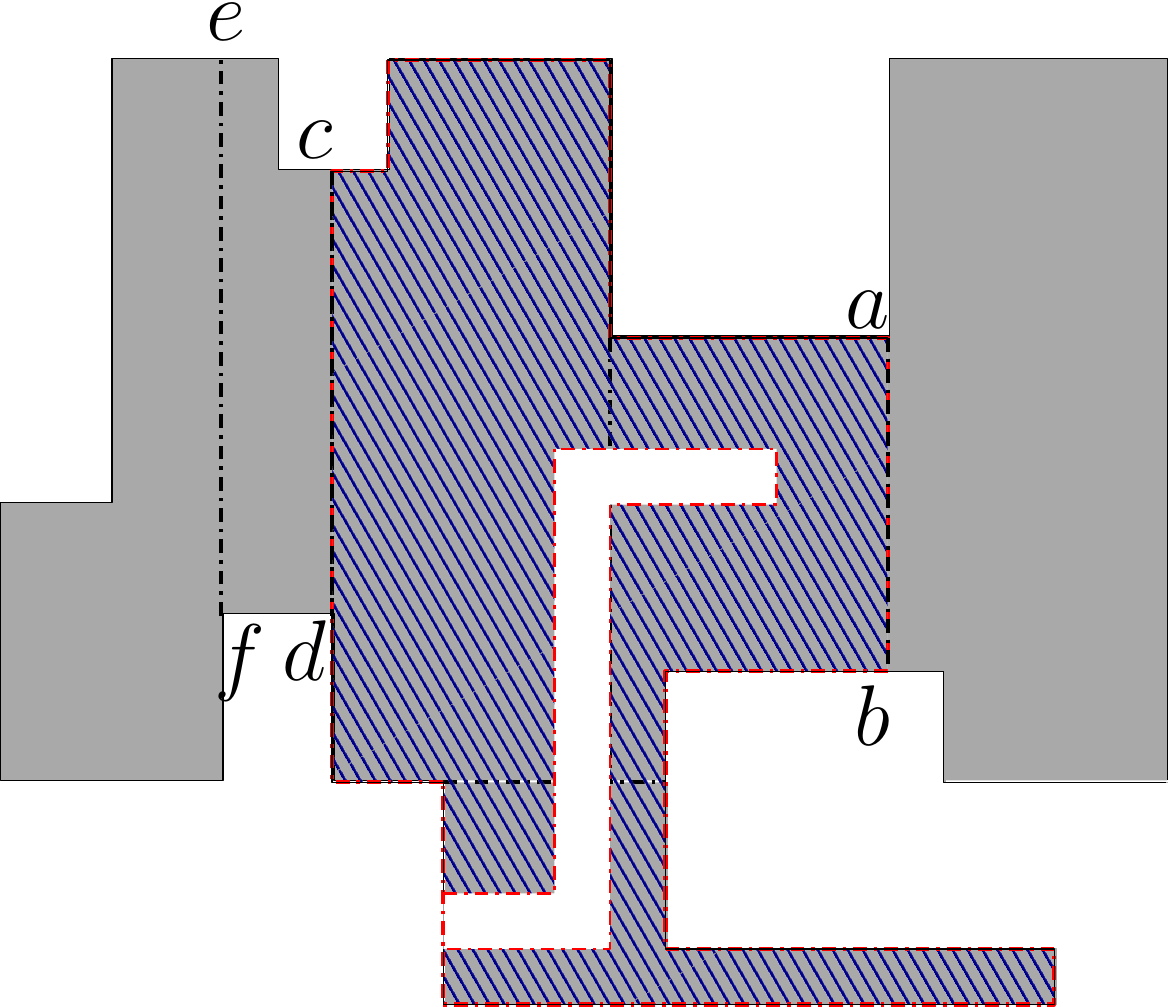}
    \end{minipage}
    \caption{Illustrating unaffected (vertical) cuts $ab$, $cd$, and $ef$ in $P'$, which remain unaffected cuts in $P$ as well.  The shaded region is the polygon $R$.}
    \label{fig:unaffvertcutsvalid}
\end{wrapfigure}

Since $R$ is an orthogonal polygon, for guarding $R$, we use the algorithm in \cite{journals/jourgeom/ORourke83}.
From the following observation, the number of vertex guards used to guard the resulting $n$-vertex polygon $P$ is at most $\lfloor n/4 \rfloor$. 

\begin{observation}
For any hole-free orthogonal polygon $P$ with an odd number of reflex vertices, and any unaffected cut $C$ corresponding to an affected region $R$ of $P$, where $C$ belongs to a connected component $CC$ of $P-R$, the parity of $C$ in $CC$ is same as the partiy of $C$ in $P$.
\end{observation}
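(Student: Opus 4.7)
The plan is to prove the observation by induction on the number of separating cuts used to isolate $R$ from $P$. Let the separating cuts (all unaffected, hence all odd cuts of $P$ by the preceding observation) be $D_1,D_2,\ldots,D_k$, and apply them one at a time. Let $P=P^{(0)},P^{(1)},\ldots,P^{(k)}$ denote the sequence of pieces containing $C$, so that $P^{(k)}$ is the connected component $CC$ of $P-R$. The invariant I would maintain at step $j$ is: the reflex count on each side of $C$ in $P^{(j)}$ has the same parity as in $P$, and $P^{(j)}$ itself has an odd total reflex count. The base case $j=0$ is immediate.

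For the inductive step, assume the invariant at step $j$ and apply $D:=D_{j+1}$. Since $D$ is an unaffected cut of $P$, an analogous induction applied to $D$ (in place of $C$) shows that $D$ remains an odd cut of $P^{(j)}$, so cutting along $D$ splits $P^{(j)}$ into two pieces each with an odd number of reflex vertices. Because $C\subset CC$ while $D\subset bd(R)$, the cuts $C$ and $D$ do not meet (and, by general position, they do not share the reflex-vertex endpoint either); consequently $D$ lies entirely on one side of $C$, say side $1$. Writing $P^{(j+1)}=P_a$ for the piece of the split containing $C$, and $P_b$ for the other piece, we then have $P_b$ contained entirely in side $1$ of $C$, with side $2$ of $C$ lying wholly in $P_a$.

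It remains to bookkeep how the reflex count on each side of $C$ changes when passing from $P^{(j)}$ to $P_a$. The cut $D$ introduces no new reflex vertex on $bd(P_a)$, and exactly one reflex vertex of $P^{(j)}$, namely the endpoint $v_D$ of $D$, is resolved by the cut. Since $v_D$ lies on side $1$ of $C$, the reflex count on side $2$ is unchanged, while side $1$ loses exactly $r(P_b)+1$ reflex vertices (those interior to $P_b$, all of which sit on side $1$, plus $v_D$ itself). Oddness of $D$ in $P^{(j)}$ gives $r(P_b)$ odd, so the loss on side $1$ is even. Hence both side-parities of $C$ are preserved in $P^{(j+1)}$, $P^{(j+1)}$ retains an odd total reflex count, and the invariant carries through. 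Taking $j=k$ yields the claim.

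The main obstacle I expect is the geometric step asserting that $C$ remains intact across the whole sequence and that every $D_{j+1}$ sits entirely on one side of $C$ in $P^{(j)}$. This ultimately rests on the minimality of $R$ together with the general-position hypothesis from the terminology section: because $R$ is a minimal orthogonal polygon covering all affected cuts, any unaffected cut $C$ lies in a single component $CC$ of $P-R$ and cannot cross $bd(R)$, and general position rules out shared reflex-vertex endpoints. Once this geometric fact is in place, the reflex-count arithmetic above is routine.
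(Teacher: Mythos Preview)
The paper states this observation without proof; it is treated as immediate from O'Rourke's parity-preservation principle, which the authors recall in the same subsection: in a polygon with an odd number of reflex vertices, cutting along an odd cut leaves the parity of every other cut unchanged in both resulting pieces. Your argument is the natural unpacking of that principle applied iteratively to the separating cuts, and it is essentially correct.

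One technical point is worth tightening. The phrase ``an analogous induction applied to $D$ (in place of $C$)'' is really a simultaneous induction in disguise: to know that $D_{j+1}$ is still an odd cut of $P^{(j)}$ you need its side-parities to have survived steps $1,\ldots,j$, which is exactly the claim you are proving for $C$. The clean fix is to carry a stronger invariant --- \emph{every} unaffected cut lying in $P^{(j)}$ has the same side-parities there as in $P$, and $P^{(j)}$ has odd reflex count --- so that the oddness of $D_{j+1}$ at the moment you apply it is part of the inductive hypothesis rather than a separately invoked fact. Your geometric step is fine as stated: since $C\subset CC\subset P\setminus R$ and $R$ is connected, $R$ (and hence each $D_{j+1}\subset bd(R)$) lies entirely on one side of $C$ in $P$, and general position rules out a shared reflex endpoint.
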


\begin{observation}
Let $S'$ (resp. $S''$) be the set comprising vertex guards to guard $R$ (resp. $P'$), determined by applying the algorithm in \cite{journals/jourgeom/ORourke83} to $R$ (resp. $P'$).
Also, let $S'''$ be the set $\{ v \in S'' \hspace{0.02in} | \hspace{0.02in} v$ is located in $P-R\}$.
As every maximal line segment in $bd(R) - bd(P)$ is an odd cut in $P$, the cardinality of $S' \cup S'''$ is upper bounded by $\lfloor n/4 \rfloor$, where $n$ is the number of vertices of $P$.
\end{observation}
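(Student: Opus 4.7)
The plan is to count reflex vertices of $P$, of $R$, and of each connected component of $P-R$, and then to apply O'Rourke's L-shape count on each piece, summing via the same odd-cut arithmetic the excerpt has already introduced. Let $r$ denote the reflex count of $P$; by the WLOG assumption earlier in this subsection $r$ is odd, so the target becomes
\[
|S' \cup S'''| \;\le\; \Bigl\lfloor \frac{n}{4} \Bigr\rfloor \;=\; \Bigl\lfloor \frac{r}{2} \Bigr\rfloor + 1 \;=\; \frac{r+1}{2}.
\]
Let $C_1, \ldots, C_c$ denote the $c$ maximal line segments of $bd(R) \setminus bd(P)$, each by hypothesis an odd cut of $P$. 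Cutting $P$ along all of them yields $R$ together with the connected components $CC_1, \ldots, CC_m$ of $P-R$; write $r_R$ for the reflex count of $R$ and $r_i$ for that of $CC_i$.

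The first step is to establish two counting identities. Since $P$ is simply connected and the $c$ cuts are pairwise non-crossing chords of $P$ that together with $c$ arcs of $bd(P)$ form the simple closed curve $bd(R)$, inserting the cuts one at a time splits exactly one existing piece into two at each insertion, yielding $c+1$ pieces in total; one of these is $R$, so $c = m$. Next, since each $C_j$ is an odd cut of the odd-reflex polygon $P$, both sides of $C_j$ must have odd reflex count, because a single cut resolves exactly one reflex vertex and so the remaining $r-1$ must split as odd $+$ odd whenever one side is odd. Iterating this over all $c$ cuts simultaneously, every one of $r_R, r_1, \ldots, r_m$ is odd, and because each cut resolves exactly one reflex vertex of $P$ without introducing any new reflex vertices in the pieces,
\[
r_R + \sum_{i=1}^{m} r_i \;=\; r - c.
\]

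The second step is the guard count. Applying O'Rourke's algorithm to $R$ yields $|S'| = \lfloor r_R/2 \rfloor + 1 = (r_R+1)/2$ L-pieces and hence the same number of guards. For $S'''$ I would invoke the parity-preservation observation stated immediately above: each unaffected odd cut of $P$ lying in a component $CC_i$ is also an odd cut of $CC_i$, so O'Rourke's divide-and-conquer on $P'$ can be organized to cut first along the unaffected odd cuts bounding $R$, after which its restriction to each $CC_i$ is an O'Rourke-style L-partition of $CC_i$ contributing at most $\lfloor r_i/2 \rfloor + 1 = (r_i+1)/2$ guards to $S'' \cap CC_i$. Summing,
\[
|S' \cup S'''| \;\le\; \frac{r_R+1}{2} + \sum_{i=1}^{m} \frac{r_i+1}{2} \;=\; \frac{r_R + \sum_{i=1}^{m} r_i + (m+1)}{2} \;=\; \frac{(r-c) + (m+1)}{2} \;=\; \frac{r+1}{2},
\]
using $c = m$ in the last equality, matching $\lfloor n/4 \rfloor$ as required.

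The step I expect to be the main obstacle is the bound $|S''' \cap CC_i| \le (r_i+1)/2$: it rests on the assertion that O'Rourke's recursion on $P'$, when restricted to a component $CC_i$ of $P-R$, yields an O'Rourke-style L-partition of $CC_i$. Establishing this cleanly requires combining the parity-preservation observation with the flexibility in O'Rourke's algorithm to select the unaffected odd cuts bounding $R$ at the earliest levels of the recursion on $P'$, and also requires the (straightforward) verification that those bounding cuts have matching parity in $P'$ and in $P$ given that $P'$ and $P$ share the same reflex parity. Once these structural facts are in hand, everything else reduces to the two counting identities and the one-line arithmetic above.
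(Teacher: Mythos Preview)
Your argument is correct and matches the paper's approach: the paper states this as an observation without a standalone proof, folding the justification into the proof of Theorem~\ref{thm:orthopolycorr}, where it invokes exactly the odd-cut arithmetic you spell out (phrased in vertex counts rather than reflex counts, and much more tersely).

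One remark on the step you flag as the main obstacle: it is not actually an obstacle, and you do not need to reorganize O'Rourke's recursion on $P'$. Because the dual of the pieces $\{R, CC_1,\dots,CC_m\}$ is a star centred at $R$, each $CC_i$ is bounded by a single cut $C_{j(i)}$ together with arcs of $bd(P')$. That cut is by definition an \emph{unaffected} cut, hence already one of the cuts in the existing L-partition of $P'$; consequently no L-piece of $P'$ crosses $bd(CC_i)$, and the fixed L-partition of $P'$ restricts verbatim to an L-partition of $CC_i$. Now pure counting finishes: with $\ell_i$ L-pieces inside $CC_i$ there are $\ell_i-1$ internal cuts, each resolving one reflex vertex and creating none, and each L-piece carries exactly one reflex vertex, so $\ell_i = r_i-(\ell_i-1)$, i.e.\ $\ell_i=(r_i+1)/2$. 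No appeal to reordering the recursion or to parity preservation of the internal cuts is needed here; Observation~2 is only used to justify that the bounding cuts $C_j$ themselves are odd in $P$, which is already part of the hypothesis of the statement.
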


In the following section, we describe an algorithm to compute the affected region $R$ efficiently.

\subsection{Algorithm for separating the affected region $R$ from $P$}
\label{subsect:sepaffreg}

To help in finding the affected region $R$ efficiently, in the preprocessing phase, we construct a data structure $\cal S$ comprising all the horizontal cuts in $Q$.
(To remind, $Q$ is the orthogonal polygon before any updates.)
This data structure helps in finding all the horizontal cuts that intersect each vertical edge of $R'$.
For this purpose, we use a data structure from Mortensen~\cite{conf/soda/Morten03}.
Given a vertical ray $r$, using this data structure, it takes $O(k + \lg{q'})$ worst-case time to determine all the horizontal line segments in the data structure that are intersected by $r$.
Here, $q'$ is the number of horizontal line segments present in the data structure at the time of the operation and $k$ is the output size.
This data structure stores any $q'$ number of horizontal line segments using $O(q'\frac{\lg{q'}}{\lg\lg{q'}})$ space; it takes $O(\lg{q'})$ worst-case time to insert any horizontal line segment into data structure or to delete any horizontal line segment from the data structure. 
The data structure $\cal S$ is constructed by inserting all the horizontal cuts in $Q$.
At the end of processing any update to $P'$, our algorithm ensures $\cal S$ precisely comprises horizontal cuts in $P$.
\begin{wrapfigure}{r}{0.42\textwidth}
    \centering
    \vspace{-0.2in}
    \begin{minipage}[t]{\linewidth}
    \centering
    \includegraphics[height=2.8cm]{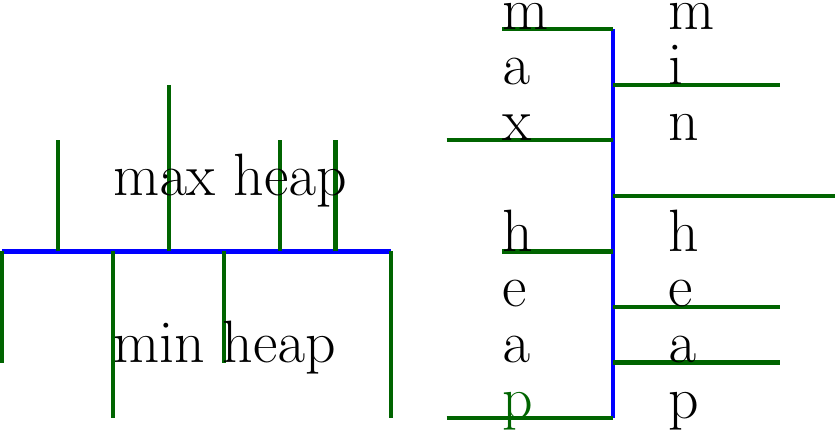}
    \end{minipage}
    \caption{
	Illustrating heaps maintained by horizontal cuts, vertical cuts, horizontal edges, and vertical edges to help in determining the $bd(R)$ by traversing it in clockwise direction. 
	The cut $C$ is shown in blue color.
	}
    \label{fig:heaps}
\end{wrapfigure}
By removing each affected horizontal cut that is reported to have been intersected with a vertical edge of $R'$ from $\cal S$, we ensure that no affected horizontal cut is reported more than once.
Analogously, we find all the vertical cuts that intersect horizontal edges of $R'$ (i.e., by maintaining vertical cuts of $P'$ in a data structure).

We maintain at most two heaps with each cut and each edge to help in efficiently tracing $bd(R)$, specifically, for finding vertices of $bd(R)$ in the clockwise order of their occurrence along $bd(R)$. 
Specifically, we maintain two heaps for each cut in the polygon, one of these is a max heap and the other one is a min heap.
Let $C$ be any horizontal cut.
Also, let $S_C$ be the set comprising all the vertical cuts that are intersecting $C$.
Then, the bottom (resp. top) endpoint of each cut $C' \in S_C$ that is lying above (resp. below) $C$ is stored in a max heap (resp. min heap) associated with $C$.
Analogously, endpoints of cuts that are incident on a vertical cut are distributed into two heaps.
(Refer to Fig.~\ref{fig:heaps}.)
In addition, for each edge $e$ of $Q$, we maintain either a max heap or a min heap with $e$.
For a vertical edge $e$ that is bounding $P$ from the right (resp. left), all the horizontal cuts that intersect $e$ are stored in a max heap (resp. min heap) associated with $e$.
Analogously, each horizontal edge of $P$ is associated with a heap.
For any heap $H$ associated with any horizontal (resp. vertical) cut/edge, for every point $p$ stored in $H$, the $x$-coordinate (resp. $y$-coordinate) of $p$ is the key value of $p$ in $H$.
To remind, max heap (resp. min heap) is a priority queue that supports extracting and querying $H$ for an element in $H$ that has a largest (resp. smallest) key value.

First, we modify heaps by deleting all cuts intersected by $R'$.
The affected region $R$ is found by determining the vertices of $R$. 
Starting from an arbitrary vertex $v$ of $R'$ that is located on an edge $e$ of $P'$, vertices of $R$ are determined in the clockwise order of their occurrence along $bd(R)$.
We first note that each vertex of $R$ is either a vertex of $P'$, or a vertex of $R'$, or an endpoint of an unaffected cut.
Let $v'$ be a vertex of $P'$ that first occurs next to $v$ on $bd(P')$ in the clockwise direction. 
\begin{wrapfigure}{r}{0.42\textwidth}
    \centering
    \vspace{-0.2in}
    \begin{minipage}[t]{\linewidth}
    \centering
    \includegraphics[height=4.4cm]{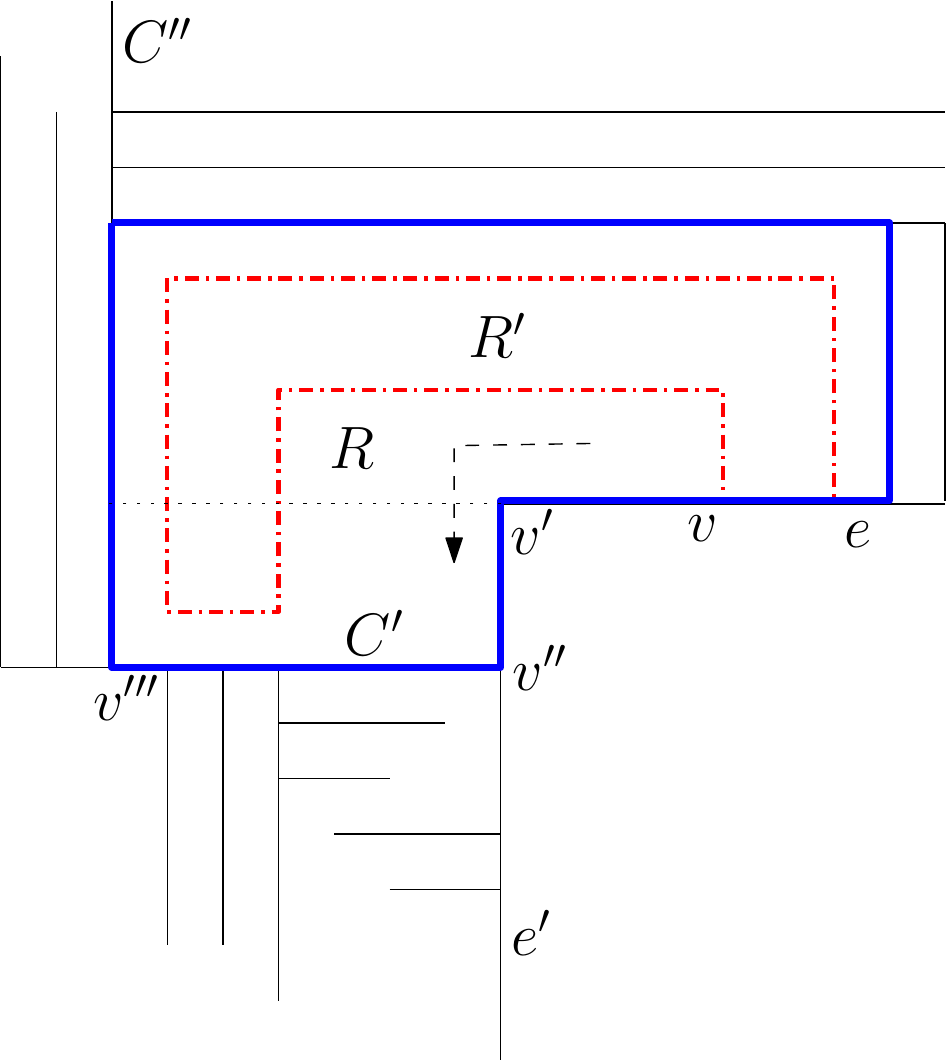}
    \end{minipage}
    \caption{
	    Illustrating the affected region $R$ (blue).
	    Algorithm starts from a vertex $v$ of $R'$, which is located on $bd(P')$.
	    The first vertex of $P'$ while traversing $bd(P')$ from $v$ in clockwise direction is $v'$.
	    Extracting cut with maximum key value from max heap associated with $e'$ yields $v''$. 
	    Again, extracting cut with maximum key value from max heap associated with $C'$ yields $v'''$. 
	    (Note that there are two heaps associated with $C'$.)
	    The dotted line incident to $v'$ is a cut intersected by $R'$.
    }
    \label{fig:traversal}
\end{wrapfigure}
Then $v'$ is a vertex of $R$. 
Also, let $e'$ be the other edge incident to $v'$.
By using the heap associated with $e'$, depending on the heap type, we extract the point with the maximum/minimum key value stored in the heap.
This point (located on $e'$) is the vertex $v''$ of $R$ that occurs next to $v'$ in the clockwise order along $bd(R)$.
In the clockwise traversal of $bd(R)$, if a section of a horizontal cut or a horizontal edge, say $f$, of $R$ is traversed from right to left (resp. left to right), then max heap (resp. min heap) associated with $f$ is used.
And, if a section of a vertical cut or a vertical edge, say $f$, of $P$ is traversed from top to bottom (resp. bottom to top), then max heap (resp. min heap) associated with $f$ is used.
If $C'$ is the cut with endpoint $v''$, then again by using a heap associated with $C'$, we find the vertex that occurs next to $v''$ in the clockwise order along $bd(R)$.
In this fashion, the algorithm continues to find vertices of $R$, until $v'$ is again found by the algorithm. 
(Refer to Fig.~\ref{fig:traversal}.)

\begin{theorem}
\label{thm:orthopolycorr}
Given a hole-free orthogonal polygon $Q$ defined with $q$ vertices, we preprocess $Q$ in $O(q \lg{q})$ time to construct data structures of size $O(q \frac{\lg{q}}{\lg {\lg {q}}})$ so that whenever any orthogonal polygon $P'$, which is obtained by a sequence of type-I and type-II updates to $Q$, is updated to an orthogonal polygon $P$, the algorithm takes $O(k \lg{(n + n')})$ worst-case time to guard $P$ using at most $\lfloor n/4 \rfloor$ vertex guards.
Here, $n'$ and $n$ are the number of vertices of $P'$ and $P$ respectively, and $k$ is the sum of $|n-n'|$ and the number of affected cuts in $P$.
\end{theorem}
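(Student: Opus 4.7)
The plan is to decompose Theorem~\ref{thm:orthopolycorr} into a correctness claim (the produced guard set surveys $P$ and has cardinality at most $\lfloor n/4 \rfloor$), a preprocessing cost bound, and a per-update cost bound, and to verify each using the three observations stated in Subsection~\ref{subsect:charaffreg}.

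For correctness, I would first argue that the heap-guided clockwise trace of Subsection~\ref{subsect:sepaffreg} returns precisely the minimal orthogonal subpolygon $R$ of $P$ that intersects every affected cut, so every maximal segment of $bd(R) \setminus bd(P)$ lies along an unaffected cut of $P'$. By the observation that unaffected odd cuts of $P'$ remain odd cuts in $P$, each such segment is an odd cut of $P$, and therefore the decomposition of $P$ into $R$ together with the connected components of $P - R$ is by odd cuts only. The inductive identity $\lfloor r/2 \rfloor + 1 = \lfloor (r_1 - 1)/2 \rfloor + 1 + \lfloor r_2/2 \rfloor + 1$ recalled in Subsection~\ref{subsect:charaffreg} then bounds the total number of L-pieces, and hence vertex guards, by $\lfloor n/4 \rfloor$, in agreement with the third observation. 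The inherited guards lying in $P - R$ continue to survey their L-pieces because none of those L-pieces is altered by the update, while O'Rourke's algorithm applied independently to $R$ surveys $R$; their union therefore covers $P$.

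For the running time, preprocessing runs O'Rourke's algorithm on $Q$ in $O(q)$ time to extract the $O(q)$ cuts, then performs $O(q)$ insertions into Mortensen's orthogonal segment structure at $O(\lg q)$ each, together with linear-time heap construction, giving $O(q \lg q)$ time and $O(q \frac{\lg q}{\lg \lg q})$ space. For an update, a constant number of vertical-ray queries against Mortensen's structure report the horizontal cuts crossing vertical edges of $R'$ in $O(k' + \lg(n+n'))$ time (with $k'$ the number of cuts so reported), and symmetrically for vertical cuts. The clockwise traversal of $bd(R)$ performs one $O(\lg(n+n'))$ heap operation per vertex visited, guarding $R$ itself uses O'Rourke's algorithm in time linear in $|R|$, and inserting or deleting the cuts and heap entries altered by the update costs $O(\lg(n+n'))$ per change. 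Summation yields the claimed $O(k \lg(n + n'))$ bound.

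I expect the main obstacle to be establishing $|R| = O(k + |n - n'|)$, without which the per-vertex heap cost does not amortize against $k$. The needed argument is a short case analysis of the trace: at every visited vertex the walk either (a) enters or leaves $bd(P')$ at a vertex of $R'$, (b) follows an edge of $P'$ between two $P'$-vertices shared with $P$, or (c) turns at an endpoint of an unaffected cut abutting an affected cut. Cases (a) and (b) contribute $O(|n - n'|)$ vertices in total, while case (c) contributes $O(1)$ vertices per affected cut, giving the required bound. Once this charging is in place, the running-time tally is immediate and the correctness argument closes.
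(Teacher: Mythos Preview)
Your proposal mirrors the paper's own proof almost exactly: correctness via the observation that unaffected odd cuts remain odd, so $bd(R)\setminus bd(P)$ consists of odd cuts and the guard count splits additively as $\lfloor n''/4\rfloor+\lfloor r'/4\rfloor\le\lfloor n/4\rfloor$; and the running time via Mortensen queries, heap operations, and O'Rourke on $R$, all charged against $k_1+k_2$. One slip: the number of Mortensen queries is not constant but one per vertical (resp.\ horizontal) edge of $R'$, i.e.\ $O(|n-n'|)$ queries in total---this does not change your final $O(k\lg(n+n'))$ tally, and your explicit charging argument for $|R|=O(k)$ is actually more detailed than the paper, which simply declares that bound immediate.
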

\begin{proof}
In Subsection~\ref{subsect:charaffreg}, we argued that all the unaffected odd cuts remain odd cuts.
Following the argument in \cite{journals/jourgeom/ORourke83}, we can always cut any orthogonal polygon at an odd cut and guard the two smaller pieces defined by that cut independently; and, for any orthogonal polygon $T$ in general position, it is guaranteed that there exists an odd cut in $T$.
The affected region $R$ is separated from $P$ by cutting along a set $S'$ of unaffected cuts; these together separate a minimal sized polygon $R$ that contains all the affected cuts in $P$.
The set $S'$ of unaffected cuts and a subset of edges of $P$ together define $bd(R)$.
The orthogonal polygon $P - R$ has L-shaped partitioning, and it is guarded using at most $\lfloor n''/4 \rfloor$ vertex guards as part of guarding $P'$.
Here, $n''$ is the number of vertices of $P - R$.
The orthogonal polygon $R$ is guarded using the algorithm described in \cite{journals/jourgeom/ORourke83}; hence, the number of vertex guards to guard $R$ is upper bounded by $\lfloor r'/4 \rfloor$, where $r'$ is the number of vertices of $R$.
Since every unaffected cut in $bd(R)$ is shown to be an odd cut, $\lfloor n''/4 \rfloor + \lfloor r'/4 \rfloor$ is at most $\lfloor n/4 \rfloor$.

The preprocessing involves L-shaped partitioning using \cite{journals/jourgeom/ORourke83}, constructing the data structure $\cal S$ (following \cite{conf/soda/Morten03}), and initializing heaps. 
Constructing these structures together takes $O(q \lg{q})$ worst-case time, and the size of these structures is $O(q \frac{\lg{q}}{\lg{\lg{q}}})$.
Futher, we note that each cut is placed in two heaps.
Let $k_1$ be the number of affected cuts in $P$.
Also, let $k_2$ be the number of vertices of $R'$. 
It is immediate to note the number of vertices of $R$ is $O(k_1 + k_2)$.
Hence, the number of L-shaped pieces in $R$ is $O(k_1 + k_2)$, i.e., $O(k)$.
Using the query algorithm in \cite{conf/soda/Morten03} for outputting the intersection of dynamic horizontal (resp. vertical) line segments with a vertical (resp. horizontal) line segment, finding all the affected cuts together takes $O((k_1 + k_2) \lg{n'})$ worst-case time.
Deleting affected cuts from the respective heaps takes $O(k_1 \lg{n'})$ worst-case time (by providing a pointer to the node that has the affected cut to the delete method).
The algorithm in \cite{journals/jourgeom/ORourke83} takes $O(k \lg{k})$ time in the worst-case to guard as well as to yield an L-shaped partitioning of $R$. 
From \cite{conf/soda/Morten03}, introducing entries corresponding to $O(k)$ cuts generated by L-shaped partitioning of $R$ into $\cal S$ takes $O(k\lg{n})$ worst-case time.
Updating heaps by inserting cuts generated by L-shaped partitioning of $R$ together take $O(k \lg{n})$ worst-case time.
\end{proof}

\section{Handling updates in a dynamic orthogonal polygon with holes}
\label{sect:holes}

In this section, we devise an algorithm to update the set of vertex guards in guarding the free space of a dynamic orthogonal polygon with dynamic orthogonal holes. 
For any current orthogonal polygonal domain ${\cal P}'$, apart from type-I and type-II updates mentioned in Subsection~\ref{subsect:charaffreg}, we allow a {\it type-III update} in which an orthogonal polygon $R'$ is inserted to the interior of ${\cal P}'$ such that $R'$ does not intersect any hole of ${\cal P}'$ as well as the outer boundary of ${\cal P}'$.
\begin{wrapfigure}{r}{0.4\textwidth}
\centering
    \begin{minipage}[t]{\linewidth}
    \centering
    \includegraphics[height=2.75cm]{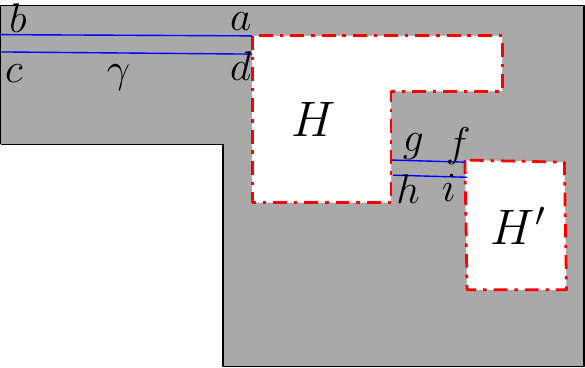}
    \end{minipage}
	\caption{Illustrating the channels of holes $H$ and $H'$ (blue) in an orthogonal polygonal domain $\cal P$.
	The channel of $H$ is incident to outer polygon of $\cal P$, whereas the channel of $H'$ is incident to $bd(H)$.
	The channel of $H$ (resp. $H'$) introduces three new vertices $b, c$, and $d$ (resp. $g, h, i$) into $P$, and $a$ (resp. $f$) is not a vertex of $P$.
	}
    \label{fig:connected_channel}
\end{wrapfigure}

Like in \cite{journals/dcg/BjorsachsSouv95}, by joining every hole to another hole or the outer boundary of the polygon, we reduce the problem of guarding the free space of an orthogonal polygonal domain to the problem of guarding a hole-free orthogonal polygon.
The joining is accomplished with thin horizontal rectangles, known as {\it channels}.
To vertex guard the resultant hole-free orthogonal polygon, we use the preprocessing and the guard update algorithms detailed in Section~\ref{sect:holefree}.
Unlike the channels constructed for polygonal domains in \cite{journals/dcg/BjorsachsSouv95}, the channels computed here are horizontal.
For any hole $H$ of an orthogonal polygonal domain $\cal T$, the {\it (horizontal) channel $\gamma_H$ of $H$} is a very thin rectangle in ${\cal F(T)}$:
let $a$ be the vertex with least $x$-coordinate value among all the vertices of $H$ with maximum $y$-coordinate value and let $b$ be the point of intersection of the leftward horizontal ray from $a$; then the the line segment between $a$ and $b$ is the top edge of the channel of $H$.
This way of defining $\gamma_H$ ensures that $\gamma_H$ does not strike $bd(H)$ before striking either another hole of $\cal T$ or the outer polygon of $\cal T$.
(Refer to Fig.~\ref{fig:connected_channel}.)
Since the channels are thin horizontal rectangles, no two channels intersect.
All other properties of channels remain same as in \cite{journals/dcg/BjorsachsSouv95}.
In particular, the following variant of a theorem from \cite{journals/dcg/BjorsachsSouv95} is useful.

\begin{propo}
\label{propro:holeless}
Any orthogonal polygonal domain $\cal T$ with $h'$ number of holes can be converted to a hole-free orthogonal polygon by removing $h'$ number of horizontal channels from $\cal T$.
\end{propo}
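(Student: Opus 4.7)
The plan is to show two things: (a) each channel $\gamma_H$ is a well-defined thin rectangle contained in the closure of $\mathcal{F(T)}$, with pairwise disjoint interiors, and (b) removing all $h'$ channels from $\mathcal{F(T)}$ yields a simply connected orthogonal region, i.e., a hole-free orthogonal polygon with $|V({\cal T})|+2h'$ vertices.

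For (a), I would first observe that $a$ (the vertex of $H$ with maximum $y$-coordinate and, among those, minimum $x$-coordinate) is a convex corner of $H$ whose two incident edges go rightward and downward, so the free space lies locally to the upper-left of $a$. The leftward open horizontal ray at height $y(a)$ starting at $a$ therefore enters $\mathcal{F(T)}$; because $\mathcal{F(T)}$ is bounded, this ray hits either $bd(H')$ for some other hole $H'$ or the outer polygon at a finite point $b$, and no portion of $bd(H)$ is struck before that by the choice of $a$. Choosing the rectangle height $\epsilon$ strictly smaller than the vertical distance between the segment $\overline{ab}$ and every other boundary feature of $\cal T$, the rectangle $\gamma_H$ with top edge $\overline{ab}$ fits in $\mathcal{F(T)}$. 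For pairwise disjointness, the general-position assumption prevents two defining vertices $a_H$ of distinct holes from lying at the same $y$-coordinate and being horizontally visible, and a uniformly small $\epsilon$ across all holes ensures no two channels overlap.

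For (b), I would form an auxiliary directed graph $G$ on vertex set $\{H_1,\ldots,H_{h'}\}\cup\{O\}$ ($O$ denoting the outer boundary) with a directed edge from $H$ to whatever boundary component $\gamma_H$ first strikes. Each hole has out-degree $1$ and $O$ has out-degree $0$. The central claim is that $G$ is acyclic. Suppose a cycle $H_{i_1}\to H_{i_2}\to\cdots\to H_{i_k}\to H_{i_1}$ exists. Since the channel from $H_{i_j}$ travels leftward at height $y(a_{H_{i_j}})$ and strikes $bd(H_{i_{j+1}})$, the hole $H_{i_{j+1}}$ contains a boundary point at height $y(a_{H_{i_j}})$, forcing $y(a_{H_{i_{j+1}}})\ge y(a_{H_{i_j}})$. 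Traversing the cycle collapses all these $y$-coordinates to a common value $y_0$; then the cycle member whose $a_H$ has the smallest $x$-coordinate has no other cycle member to its left at height $y_0$, so its channel cannot strike any cycle member, a contradiction (and general position rules out coincident heights to begin with).

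Acyclicity together with out-degree $1$ at every hole implies that every directed path originating at a hole must terminate at $O$; hence $G$, viewed as an undirected graph on $h'+1$ vertices with $h'$ edges, is a tree. Topologically, the complement of $\mathcal{F(T)}$ in the outer polygon consists of $h'$ connected components (one per hole), and each channel $\gamma_H$, when removed from $\mathcal{F(T)}$, is appended to the complement and merges the two distinct components it spans, because it is an edge of the tree. After performing all $h'$ such merges, the complement of the new region inside the outer polygon has $h'-h'=0$ components, so the resulting region is simply connected and hence a hole-free orthogonal polygon. The main obstacle in this proof is the acyclicity argument for $G$, since the ``strike'' relation is defined by geometric ray-shooting and needs the general-position hypothesis to rule out degenerate configurations where several holes share the same top height.
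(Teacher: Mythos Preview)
The paper does not supply its own proof of this proposition; it is stated as a direct specialization to the orthogonal setting of a theorem of Bjorling-Sachs and Souvaine, and the only supporting remark in the text is the one-line observation that ``since the channels are thin horizontal rectangles, no two channels intersect.'' Your proposal therefore furnishes exactly the argument the paper defers to the citation.

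Your proof is correct. The key step---acyclicity of the strike graph $G$---is handled cleanly: the observation that $\gamma_H$ hitting $H'$ forces $y(a_{H'})\ge y(a_H)$ collapses any putative directed cycle to a single height, and then the leftmost-$x$ member of the cycle has nowhere to send its channel among the remaining cycle members (any hit on a cycle member $H'$ at that common height would force $x(a_{H'})$ to be still smaller). From there, out-degree~$1$ at each hole and the unique sink $O$ make the undirected $G$ a spanning tree on $h'+1$ vertices, and the component-merging count yields simple connectivity. One small wording point: the ``complement inside the outer polygon'' ends with one component, not zero, but that component touches the outer boundary, so the number of \emph{interior} complement components (i.e., holes) is indeed $0$; your conclusion stands under that reading.

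Compared with the general Bjorling-Sachs--Souvaine construction, your specialization exploits the axis-aligned geometry to get a simple two-key lexicographic order ($y$ then $x$) that rules out cycles; in the non-orthogonal case the channels are oblique and one needs a different monotonicity argument. That simplification is precisely the point of the paper's choice of horizontal channels, and your write-up makes it explicit.
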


For any vertex $a$ of a hole $H$ of $\cal T$ with the channel $\gamma_H$ of $H$ incident to $a$, since $a$ is not a vertex in $P$, $\gamma_H$ introduces three additional vertices.
Hence, every channel introduces three new vertices; however, none of them is reflex and they together help in removing a reflex vertex of a hole.
Since the L-shaped partitioning algorithm \cite{journals/jourgeom/ORourke83} places guards at only reflex vertices, no guard is placed at any of the additional vertices introduced by channels.
Let $T$ be the hole-free orthogonal polygon obtained from $\cal T$ as described above.
Then, $T$ has a total of $(n'+2h')$ vertices, where $n'$ is the number of vertices of $\cal T$ and $h'$ is the number of holes of $\cal T$.  
(Refer to Fig.~\ref{fig:connected_channel}.)
Hence, in guarding $T$, the algorithm is allowed to use at most $\lfloor (n' + 2h')/4 \rfloor$ vertex guards.
The following proposition formalizes this observation.

\begin{propo}[Theorem~3.1, \cite{journals/dcg/BjorsachsSouv95}]
\label{propo:guardingholeless}
For any orthogonal polygonal domain $\cal T$ and the hole-free orthogonal polygon $T$ constructed from $\cal T$ as described, the guard placement of $T$ due to the L-shaped partitioning algorithm in \cite{journals/jourgeom/ORourke83} guards the free space of $\cal T$ using at most $\lfloor (n' + 2h')/4 \rfloor$ vertex guards.
Here, $n'$ is the number of vertices of $\cal T$ and $h'$ is the number of holes of $\cal T$.
\end{propo}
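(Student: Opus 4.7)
The plan is to invoke Proposition~\ref{propro:holeless} to reduce the problem to guarding a hole-free orthogonal polygon, then apply the bound of~\cite{journals/jourgeom/ORourke83} to that polygon, and finally verify that the resulting guards also see every point of the removed channels.

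First, let $T$ be the hole-free orthogonal polygon produced from $\cal T$ by deleting the $h'$ horizontal channels guaranteed by Proposition~\ref{propro:holeless}. For each hole $H$ of $\cal T$ with channel $\gamma_H$ having top corners $a,b$ and bottom corners $a',b'$, the vertex $a$ of $H$ ceases to be a vertex of $T$ because the top edge of $H$ incident to $a$ and the top edge of $\gamma_H$ are collinear and merge into one straight boundary segment, while $b$, $a'$, and $b'$ appear as three fresh convex vertices of $T$. Thus $T$ has exactly $n' - h' + 3h' = n' + 2h'$ vertices, and the three new ones contributed by each channel are all convex.

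Next, I would apply the L-shaped partitioning algorithm of~\cite{journals/jourgeom/ORourke83} to $T$, which partitions $T$ into at most $\lfloor (n' + 2h')/4 \rfloor$ L-shaped pieces and places one vertex guard at a reflex vertex of each piece. Since the channel vertices are convex, no guard is ever placed at one of them, so every guard is located at a vertex of $\cal T$ and in particular lies inside the free space of $\cal T$.

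It remains to show that these guards jointly cover the entire free space of $\cal T$. For a point already in $T$ this follows from O'Rourke's guarantee together with the inclusion $T \subseteq \mathcal{F(T)}$, since a segment in $T$ is a segment in $\mathcal{F(T)}$. For a point $p$ in the open interior of a channel $\gamma_H$ more care is needed: by construction $\gamma_H$ meets no hole of $\cal T$ and touches the rest of $bd(\cal T)$ only along its two short vertical sides, and its height $\epsilon$ may be chosen as small as we please. Thus if $g$ is a guard in $T$ that sees the point of $bd(T)$ directly above $p$ via a segment arriving from above, then shifting the target endpoint vertically downward by $\epsilon$ keeps the segment inside $\mathcal{F(T)}$, and so $g$ also sees $p$ in $\mathcal{F(T)}$. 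The main obstacle is making this extension-of-visibility argument precise, because visibility in $T$ treats the channel interiors as obstacles whereas visibility in $\mathcal{F(T)}$ does not. This is exactly the orthogonal, rectangular-channel specialization of Theorem~3.1 of~\cite{journals/dcg/BjorsachsSouv95}, so in the final write-up I would simply invoke that theorem instead of reproducing its proof.
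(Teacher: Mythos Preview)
Your proposal is correct and matches the paper's own treatment: the paper does not give an independent proof here but simply records the vertex count $n'+2h'$, notes that the three new vertices per channel are convex so O'Rourke's algorithm never places a guard at them, and then cites Theorem~3.1 of \cite{journals/dcg/BjorsachsSouv95} for the proposition itself. Your write-up expands the same argument and, for the coverage of the channel interiors, ultimately defers to the same cited theorem, so there is no substantive difference in approach.
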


\subsection{Preprocessing}
\label{subsect:preprocholes}

\ignore {
\cite{conf/soda/GiyoKapl07}: space $O(n\frac{\lg{n}}{\lg\lg{n}})$, any insertion/deletion of a vertical edge $O(\lg{n})$ amortized time, query $O(\lg{n})$ worst-case time --- for \cal V

\cite{conf/soda/Morten03}: space $O(n\frac{\lg{n}}{\lg\lg{n}})$, any insertion/deletion of a channel $O(\lg{n})$ worst-case time, and the query in $O(k+\lg{n})$ worst-case time where $k$ is the size of the output --- for \cal H
}
We first describe a data structure from Gioya and Kaplan~\cite{conf/soda/GiyoKapl07}.
Given a horizontal ray $r$, in contrast to \cite{conf/soda/Morten03}, this data structure determines only the first line segment intersected by $r$ among all the line segments currently stored in the data structure.
This is done in $O(\lg{q'})$ worst-case time, where $q'$ is the number of vertical line segments present in the data structure at the time of the operation.
This data structure stores any $q'$ number of vertical line segments using $O(q'\frac{\lg{q'}}{\lg\lg{q'}})$ space; and, it takes $O(\lg{q'})$ amortized time to insert any vertical line segment into data structure or to delete any vertical line segment from the data structure.

Given an orthogonal polygonal domain $\cal Q$ with $q$ vertices and $h'$ (orthogonal) holes, using the algorithm to compute the data structure in \cite{conf/soda/GiyoKapl07}, in $O(q\lg{q})$ amortized time, we build a data structure $\cal V$ of size $O(q\frac{\lg{q}}{\lg{\lg{q}}})$ comprising all the vertical edges of $\cal Q$.
We use $\cal V$ to compute channels for all the holes of $\cal Q$ in $O(q+h'\lg{q})$ time: this is accomplished with $h'$ number of ray-shooting queries (with horizontal rays) among the vertical edges of $\cal Q$ in $\cal V$.

Using the data structure in Mortensen~\cite{conf/soda/Morten03} (described in Subsection~\ref{subsect:sepaffreg}), in $O(h\lg{h})$ worst-case time, we build a data structure $\cal H$ of size $O(h\frac{\lg{h}}{\lg{\lg{h}}})$ comprising all the channels (horizontal line segments) in $\cal Q$.
This data structure helps in efficiently finding all the channels that intersect any vertical edge of $R'$.

We also compute the hole-free orthogonal polygon $Q$ corresponding to $\cal Q$.
In addition, the preprocessing algorithm in Section~\ref{sect:holefree} is applied to $Q$.

\subsection{Algorithm to update the set of vertex guards}
\label{subsect:updateholes}

After any update to ${\cal P}'$, using Proposition~\ref{propro:holeless}, we transform $\cal P$ to a hole-free orthogonal polygon $P$ by updating the set of channels in ${\cal P}'$.
Then, we use the algorithm from Section~\ref{sect:holefree} to update the set of vertex guards to guard $P$.
From Proposition~\ref{propo:guardingholeless}, we note that these set of vertex guards suffice to guard $\cal P$.

In the following, we devise an efficient algorithm to transform $\cal P$ to $P$ by introducing at most one new channel and updating a subset of existing channels in ${\cal P}'$.
When a hole $H$ is modified such that the channel of $H$ needs to be modified, we update the channel of $H$ by a ray-shooting query. 
In the following, we describe the other case (type-III update) in which an orthogonal hole $R'$ is inserted to ${\cal P}'$.
The polygon $R'$ may intersect some of the current channels in ${\cal P}'$.
(Refer to Fig.~\ref{fig:channel_hitting}.)
Let $\Gamma$ be the set of channels that intersect with $R'$.
We need to update every channel in $\Gamma$.
\begin{wrapfigure}{r}{0.4\textwidth}
\centering
    \begin{minipage}[t]{\linewidth}
    \centering
    \includegraphics[height=4cm]{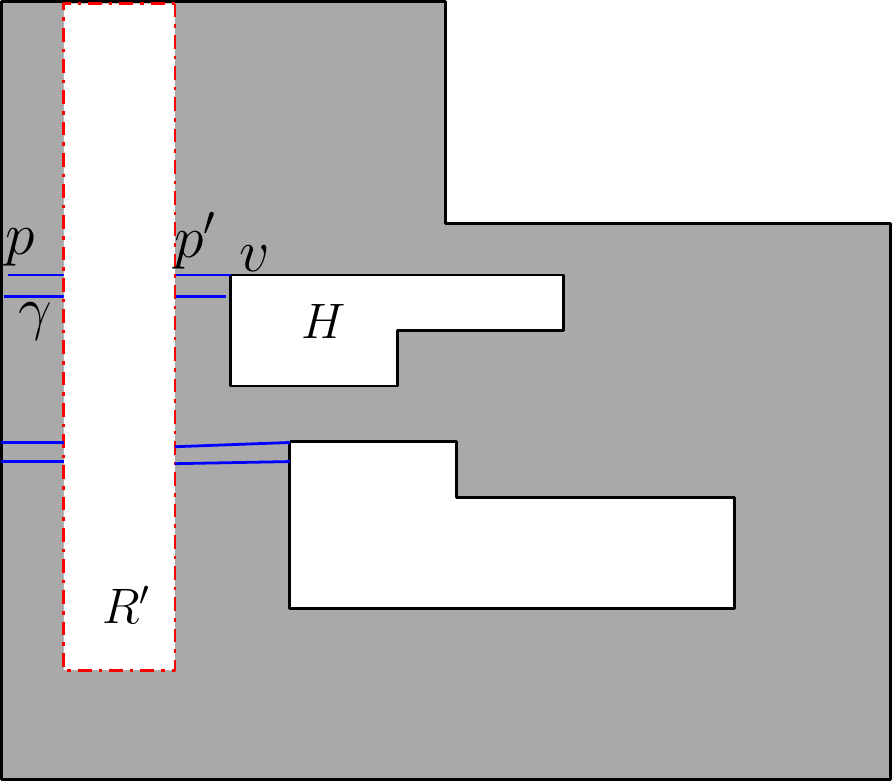}
    \caption{Illustrating an orthogonal polygon $R'$ intersecting channels in $\cal P'$.}
    \label{fig:channel_hitting}
    \end{minipage}
\end{wrapfigure}
Let $v$ be a vertex of a hole $H$ of $\cal P$ on which the channel of $H$ is incident.
For a channel $\gamma_H$ of a hole $H$ with endpoints $v$ and $p$, updating $\gamma_H$ involves determining the point $p'$ on line segment $vp$ such that $p'$ belongs to a vertical edge of $R'$ and the interior of line segment $vp'$ is in the free space of $\cal P$.
Essentially, the channel $vp$ is replaced with a horizontal channel with its top edge being $vp'$.
We update $\cal V$ data structure so that it comprises vertical edges of $\cal P$.
With the ray-shooting query with ray $vp$, using $\cal V$, we determine the point $p'$.
In addition, we update the channel $\gamma_H$ in the set $\cal H$ of channels.
Further, with a ray-shooting query, we determine the channel of $R'$.
There are only two ways in which any cut in $P'$ could get affected: due to its intersection with $R'$, and due to the reconstruction of affected channels in $\Gamma$.
Following the algorithm in Subsection~\ref{subsect:sepaffreg}, we determine all the affected cuts.
The following pseudocode gives the steps of the algorithm.

\begin{algorithm}
\caption{Update vertex guards of a dynamic orthogonal polygonal domain}
\label{code:guardpolydom}

Update the channels intersected by $R'$ using $\cal V$. \

Insert the modified channels into $\cal H$. \

Compute the channel of $R'$ using a ray-shoot. \

Update $\cal H$ to include the channel of $R'$. \

Update $\cal V$ to include the vertical edges of $R'$. \

Compute $bd(R)$ using the algorithm in Subsection~\ref{subsect:sepaffreg}. \

Determine the updated vertex guard set of the orthogonal polygon $P$ that corresponds to $\cal P$. \

\end{algorithm}

\begin{theorem}
\label{lem:orthopolyholecorr}
Given an orthogonal polygonal domain $\cal Q$ defined with $q$ vertices, we preprocess $\cal Q$ in $O(q \lg{q})$ time to construct data structures of size $O(q\frac{\lg{q}}{\lg\lg{q}})$ so that whenever any orthogonal polygonal domain ${\cal P}'$, which is obtained by a sequence of type-I, type-II and type-III updates to $\cal Q$, is updated to an orthogonal polygonal domain $\cal P$, the algorithm takes $O(k\lg{(n+n')})$ amortized time to guard $\cal P$ using at most $\lfloor (n+2h)/4 \rfloor$ vertex guards.
Here, $h$ is the number of orthogonal holes in $\cal P$, $n'$ and $n$ are the number of vertices of ${\cal P}'$ and $\cal P$ respectively, and $k$ is the sum of $|n-n'|$, the number of affected cuts in ${\cal P}'$ and the number of affected channels in ${\cal P}'$.
\end{theorem}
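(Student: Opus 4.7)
The plan is to reduce the problem entirely to the hole-free case handled by Theorem~\ref{thm:orthopolycorr}, with the channel maintenance machinery acting as the bridge. First I would observe that by Proposition~\ref{propro:holeless}, after applying Algorithm~\ref{code:guardpolydom} the hole-free polygon $P$ faithfully represents $\cal P$: every hole of $\cal P$ is joined to another hole or to the outer boundary via exactly one channel, and the channels are pairwise non-intersecting because they are thin horizontal rectangles. Hence the invariant maintained by the update loop is that $P$ is a hole-free orthogonal polygon with exactly $n + 2h$ vertices, where $h$ is the number of holes in $\cal P$. Once this invariant is established, Proposition~\ref{propo:guardingholeless} immediately yields that the vertex guards produced by the L-shaped partitioning algorithm on $P$ suffice to guard $\cal F(\cal P)$ with at most $\lfloor (n + 2h)/4 \rfloor$ guards.

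For correctness of the update steps themselves, I would argue as follows. When a type-I or type-II update modifies only the boundary of a hole $H$ (or the outer boundary) without displacing channels, the channel of $H$ either remains valid or needs to be reshot; in either case a single ray-shooting query in $\cal V$ recomputes the correct channel. For a type-III update that inserts a new hole $R'$, the only channels of ${\cal P}'$ that can become invalid are those in $\Gamma$, i.e. channels intersected by $R'$; each such channel is truncated by a ray-shoot against the vertical edges of $R'$ (now present in $\cal V$), and a fresh channel is computed for $R'$ itself by another ray-shoot. Every other channel is undisturbed, so the resulting $P$ again satisfies the invariant. Once $P$ is known, the affected-region analysis of Subsection~\ref{subsect:sepaffreg} applies verbatim, since the only new sources of affected cuts are intersections of cuts with $R'$ and with the reconstructed channels in $\Gamma$; these are exactly the quantities counted by the parameter $k$.

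The complexity bookkeeping then follows by adding up the costs. Preprocessing requires building $\cal V$ in $O(q\lg q)$ amortized time using \cite{conf/soda/GiyoKapl07}, building $\cal H$ in $O(h\lg h)$ worst-case time using \cite{conf/soda/Morten03}, computing the $h'$ initial channels by $h'$ horizontal ray-shoots in $O(q + h'\lg q)$, and invoking the preprocessing of Theorem~\ref{thm:orthopolycorr} on the resulting $Q$ in $O(q\lg q)$ time; these together give $O(q\lg q)$ time and $O(q \frac{\lg q}{\lg\lg q})$ space. For an update, updating the $|\Gamma|$ intersected channels costs $O(|\Gamma|\lg n)$ ray-shoots in $\cal V$ and $O(|\Gamma|\lg n)$ updates in $\cal H$, computing the channel of $R'$ is one ray-shoot in $O(\lg n)$, inserting the vertical edges of $R'$ into $\cal V$ costs $O(|n - n'|\lg n)$ amortized, and the subsequent hole-free guard update from Theorem~\ref{thm:orthopolycorr} costs $O(k\lg(n+n'))$ worst-case (treating the introduced channel vertices exactly as ordinary polygon vertices). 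Summing, and noting that $|\Gamma|$ and $|n-n'|$ are subsumed by $k$, yields the claimed $O(k\lg(n+n'))$ amortized bound.

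The main obstacle I anticipate is the bookkeeping to justify that every affected cut, every affected channel, and every newly created vertex is counted at most a constant number of times in the parameter $k$, and that insertion of $R'$'s vertical edges into $\cal V$ does not introduce hidden amortized costs beyond those already charged by \cite{conf/soda/GiyoKapl07}. A second subtlety is ensuring that replacing a channel does not silently invalidate a cut previously classified as unaffected; this must be handled by treating every cut abutting or crossing a channel in $\Gamma$ as affected, which is exactly how $k$ is defined in the theorem statement. Once these accounting points are settled, the theorem follows by combining the correctness of the channel construction (Propositions~\ref{propro:holeless} and~\ref{propo:guardingholeless}) with the dynamic hole-free guarantee of Theorem~\ref{thm:orthopolycorr}.
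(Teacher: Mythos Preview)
Your proposal is correct and follows essentially the same approach as the paper: reduce $\cal P$ to a hole-free polygon $P$ via channel maintenance, invoke Propositions~\ref{propro:holeless} and~\ref{propo:guardingholeless} for the guard bound, and appeal to Theorem~\ref{thm:orthopolycorr} for the hole-free update, with the complexity obtained by summing the costs of channel recomputation, $\cal V$/$\cal H$ maintenance, and the hole-free guard update. The paper's own proof is terser and organizes the accounting around $k_1$ (affected cuts), $k_2$ (vertices of $R'$), and $k_3$ (affected channels), but the content is the same as your $|\Gamma|$ and $|n-n'|$ bookkeeping.
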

\begin{proof}
Subsection~\ref{subsect:preprocholes} details the time complexity of preprocessing algorithm and the space of data structures computed during that phase.
As in the proof of Theorem~\ref{thm:orthopolycorr}, let $k_1$ be the number of affected cuts in $P$ and let $k_2$ be the number of vertices of $R'$.
Also, let $k_3$ be the number of channels affected due to $R'$.
Computing the updated channels using $\cal V$ and inserting the modified channels into $\cal H$ together take $O(k_3\lg{n})$ amortized time.
Then, for type-I and type-II updates, computing the channels intersected by $R'$ and updating these channels together take $O((k_2 + k_3)\lg{n})$ amortized time.
In $O(k_3 \lg{n})$ worst-case time, we update $\cal H$ with the updated affected channels.
From Theorem~\ref{thm:orthopolycorr}, finding $bd(R)$, guarding $P$, and updating the associated data structures take $O((k_1+ k_2)\lg{n})$ worst-case time.
Updating $\cal V$ with vertical edges of $R'$ takes $O(k_2 \lg{n})$ amortized time.
Hence, after any type-I or type-II update, algorithm takes $O(k \lg{n})$ amortized time to update the set of vertex guards to guard $\cal P$, where $k$ is $k_1 + k_2 + k_3$.
For any type-III update, apart from the computations in type-I and type-II updates, we need to compute the channel due to $R'$.
Using $\cal V$, doing a ray-shooting query to find the channel of $R'$ takes $O(\lg{n})$ amortized time.
Like in type-I and type-II updates, updating $\cal V$ with vertical edges of $R'$ takes $O(k_2 \lg{n})$ amortized time.
Hence, any type-III update takes $O(k \lg{n})$ amortized time, where $k$ is $k_1 + k_2 + k_3$.
\end{proof}

\section{Conclusions}
\label{sect:conclu}
We presented an algorithm to update the set of vertex guards to survey the free space of a dynamic orthogonal polygonal domain.
It would be interesting to devise a dynamic algorithm for this problem, i.e., by removing the dependency on the non-input parameters from the update time complexity.
One possible extension of this problem is to consider guarding dynamic polygonal (not necessarily orthogonal) domains with point guards.
The other direction could be to maintain a set of guards so that the number of guards is an approximation to the optimal number of guards required to guard any dynamic polygonal domain.
Finally, upper bounding how far from optimal is the guard cover for the updated polygon in our algorithm could be worth exploring. 

\ignore {
1. The worst case bound of n/4 guards can be can be quite far from the optimal. Is there a way to (perhaps only allowing simpler updates than what is allowed here) maintain an (approximately) optimal vertex guard cover starting from an optimal cover of the original polygon?

2. Given that visibility is a global property of polygons (guards can see arbitrarily far way), how badly does an update in your algorithm destroy an optimal cover for P' as it is updated to P, i.e., how far from optimal is the resulting guard cover for P?

3. Given that n/4 is the upper bound for guarding also orthogonal polygons with holes; see the specific comment for lines 62-64, can you obtain this guarantee with your algorithm (perhaps modified)? Since any orthogonal hole has at least four vertices, h<=n/4 so in terms of n your bound is (n+2h)/4<=3n/8. Can you show a better bound in terms of just n for your algorithm?
}

\subsection*{Acknowledgement}

This research of R. Inkulu is supported in part by SERB MATRICS grant MTR/2017/000474.

\medskip

\bibliographystyle{plain}
\bibliography{../ajar/results/bibs/weiregsp,../ajar/results/bibs/geomgraphs,../ajar/results/bibs/misc,../ajar/results/bibs/shortestpaths,../ajar/results/bibs/visibility,../ajar/results/bibs/nearneigh,../ajar/results/bibs/voronoi,../ajar/results/bibs/artgall}

\end{document}